\documentclass[11pt]{article}
\usepackage[a4paper,margin=1in]{geometry}
\usepackage[T1]{fontenc}
\usepackage[utf8]{inputenc}
\usepackage{lmodern}
\usepackage{amsmath,amssymb,amsthm,mathtools}
\usepackage{booktabs}
\usepackage{graphicx}
\usepackage[hidelinks]{hyperref}
\usepackage[nameinlink,noabbrev]{cleveref}
\usepackage{enumitem}
\usepackage{microtype}

\newtheorem{theorem}{Theorem}
\newtheorem{proposition}{Proposition}
\newtheorem{lemma}{Lemma}
\theoremstyle{definition}
\newtheorem*{problem}{Problem}
\theoremstyle{plain}
\DeclareMathOperator{\conv}{conv}
\DeclareMathOperator{\Per}{Per}
\crefname{theorem}{Theorem}{Theorems}
\crefname{proposition}{Proposition}{Propositions}
\crefname{lemma}{Lemma}{Lemmas}
\newcommand{\appref}[1]{\hyperref[#1]{Appendix~\ref*{#1}}}

\title{Sibley's Guard-Point Convexity Measure:\\A Perimeter Counterexample and a Dominance Bound}
\author{Masahito Nakano\\Independent researcher, Japan\\\href{mailto:math@nakano-masahito.jp}{\texttt{math@nakano-masahito.jp}}}
\date{August 2026}

\begin{document}
\maketitle

\begin{abstract}
For a simple polygonal region $F$, let $K(F)$ be its polygon kernel, or Sibley's guard-point set, and let $C=\conv(F)$.  The associated guard-point, exterior, and perimeter measures are $G(F)=|K(F)|/|F|$, $E(F)=|F|/|C|$, and $P(F)=\Per(C)/\Per(F)$.  Using a kernel-adapted anisotropic perimeter, we prove $G(F)\le E(F)$.  We disprove the pointwise inequality $G(F)\le P(F)$ by an explicit nonconvex pentagon with integer coordinates for which $G(F)=62/63$ and $P(F)=185/189$.  Nevertheless, $G(F)\le2P(F)$ holds for every simple polygon, and hence $G$ cannot asymptotically dominate $P$ in Sibley's sense.  Thus the two assertions in Sibley's Conjecture~2 are settled in opposite directions.  We also observe that Sibley's convexity coefficient $\chi$ and interior measure $I$ coincide, respectively, with the Beer index $b$ and convexity ratio $c$.  The theorem $b\le180c$ of Balko, Jel\'inek, Valtr, and Walczak therefore yields $\chi\le180I$, resolving Sibley's Conjecture~1.
\end{abstract}

\medskip
\noindent\textbf{Keywords.} convexity measure; simple polygon; polygon kernel; star-shaped polygon; geometric visibility; anisotropic perimeter; Beer index; domination.

\medskip
\noindent\textbf{2020 Mathematics Subject Classification.} Primary 52A40; Secondary 52A30, 52A10.

\section{Introduction}
Convexity is a binary property, but non-convex sets may be closer to convex in several different geometric senses.  Sibley \cite{Sibley2025} compared six normalized measures for polygons together with their interiors and formulated two numbered conjectures concerning comparisons among them.  Conjecture~2 was subsequently highlighted as Question~2, ``Thomas Sibley's Conjectures,'' in the CCCG 2025 open-problems summary \cite{CCCG2025}.  The present paper uses Sibley's notation and terminology.

Let $F\subset \mathbb{R}^2$ be a simple polygonal region, including its boundary and interior, and let
\[
C=\conv(F)
\]
be its convex hull.  A point $g\in F$ is a \emph{guard point} if
\[
[g,p]\subset F\qquad\text{for every }p\in F,
\]
where $[g,p]$ denotes the closed segment joining $g$ and $p$.  We write
\[
K(F)=\{g\in F:[g,p]\subset F\text{ for every }p\in F\}.
\]
In standard visibility terminology, $K(F)$ is the kernel of $F$.  Following Sibley's terminology, we call $K(F)$ the set of guard points, or the guard-point set, of $F$.  Sibley proves that this set is convex \cite{Sibley2025}.  For a counterclockwise simple polygon, $K(F)$ is equivalently the intersection of the closed interior half-planes determined by the oriented sides of $F$.

The three measures considered here are
\[
G(F)=\frac{|K(F)|}{|F|},\qquad
E(F)=\frac{|F|}{|C|},\qquad
P(F)=\frac{\Per(C)}{\Per(F)},
\]
where $|\cdot|$ denotes area and $\Per(\cdot)$ denotes ordinary Euclidean perimeter.  These are Sibley's notations; in the CCCG open-problems summary, the corresponding exterior and perimeter measures are denoted by $A$ and $L$, respectively.  For a simple polygonal curve, each hull side replaces a boundary chain by the corresponding chord, so $\Per(C)\le \Per(F)$ and hence $0<P(F)\le 1$.  If $|K(F)|=0$, then $G(F)=0$.  Since $E(F)>0$ and $P(F)>0$ for every simple polygonal region, both upper bounds for $G$ proved below, namely $G(F)\le E(F)$ and $G(F)\le2P(F)$, then hold automatically.  It therefore remains to consider the substantive case $|K(F)|>0$; in particular, such a polygon is star-shaped.

Sibley compares convexity measures not only by pointwise inequalities, but also by an asymptotic relation called domination.  In Sibley's sense, a measure $B$ \emph{dominates} a measure $A$, written $B\gg A$, if there exists a sequence of polygons $F_n$ such that
\[
B(F_n)\to 1,\qquad A(F_n)\to 0.
\]
Thus $B\gg A$ means that a sequence may look nearly convex according to $B$ while becoming extremely non-convex according to $A$.  We write $B\not\gg A$ for the negation of this relation; in particular, $G\not\gg P$ means that no sequence of polygons can satisfy $G(F_n)\to 1$ while $P(F_n)\to 0$.  The two pointwise inequalities in Sibley's Conjecture 2,
\[
G(F)\le E(F),\qquad G(F)\le P(F),
\]
would immediately imply $G\not\gg E$ and $G\not\gg P$.  They are therefore strong pointwise sufficient conditions for two entries in Sibley's domination comparison.

Our first purpose is to resolve Conjecture~2 while separating its pointwise assertions from the underlying domination intuition.  We prove that
\[
G\le E\quad\text{is true},
\]
show by an explicit pentagon that
\[
G\le P\quad\text{is false},
\]
and nevertheless prove the uniform estimate
\[
G(F)\le 2P(F),
\]
which implies
\[
G\not\gg P.
\]
Thus the two pointwise assertions in Conjecture~2 are settled in opposite directions, while $G$ still cannot asymptotically dominate $P$ in Sibley's sense.  These are the new geometric results of this paper.

Our second purpose is bibliographic.  Sibley's Conjecture~1 asks for an absolute constant $r$ such that
\[
\chi(F)\le rI(F)
\]
for every simple polygon.  We identify $\chi$ and $I$, respectively, with the previously studied Beer index $b$ and convexity ratio $c$.  Corollary~1.4 of Balko, Jel\'inek, Valtr, and Walczak \cite{BalkoEtAl2017} then implies Conjecture~1 with $r=180$.  This identification and its consequence are recorded in \cref{sec:beer}; the estimate itself is due to Balko et al.

\begin{figure}[t]
\centering
\includegraphics[width=0.93\linewidth,trim=0 8.5pt 0 12pt,clip]{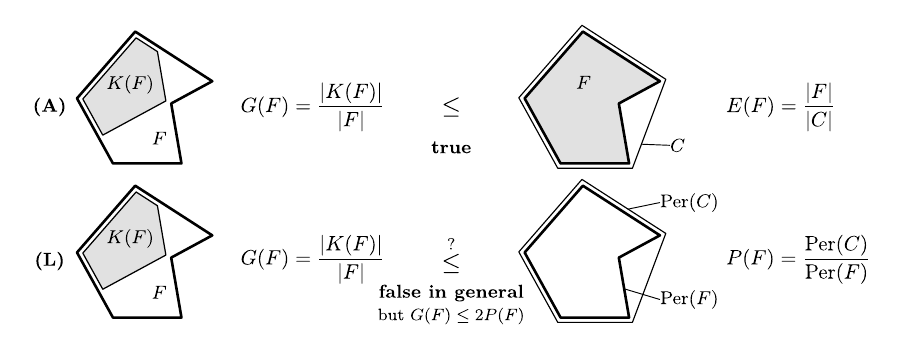}
\caption{The two pointwise comparisons in Sibley's Conjecture~2.  The diagram records the universal conclusions, not the numerical values of the depicted shape: $G\le E$ holds; $G\le P$ fails, but $G\le2P$ holds.  Coincident hull and polygon boundaries are slightly offset for visibility.}
\label{fig:comparisons}
\end{figure}

\section{Definitions and main results}
The following theorem collects the main results concerning Sibley's original measures.

\begin{theorem}\label{thm:main}
The following statements hold for Sibley's guard-point measure $G$, exterior measure $E$, and perimeter measure $P$.
\begin{enumerate}[label=\textup{(\roman*)},leftmargin=*]
\item For every simple polygon $F$,
\[
G(F)\le E(F).
\]
\item The pointwise perimeter inequality $G(F)\le P(F)$ fails in general.  More precisely, there exists a simple nonconvex pentagon $F$ for which
\[
G(F)=\frac{62}{63},\qquad P(F)=\frac{185}{189},
\]
so that $G(F)>P(F)$.
\item For every simple polygon $F$,
\[
G(F)\le 2P(F).
\]
\end{enumerate}
\end{theorem}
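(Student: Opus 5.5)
The plan is to treat the three parts separately. Parts (i) and (iii) rest on one identity. Write the polygon counterclockwise, and for each side $e$ let $\ell_e$ be its length, $n_e$ its outward unit normal and $h_e$ the offset of its supporting line, so the line is $\{x:\langle x,n_e\rangle=h_e\}$. For any $g\in K(F)$ the triangles $\conv(\{g\}\cup e)$ tile $F$, because $F$ is star-shaped from $g$. Hence
\[
|F|=\tfrac12\sum_e \ell_e\bigl(h_e-\langle g,n_e\rangle\bigr)=\tfrac12\sum_e \ell_e\, d_e(g),
\]
where $d_e(g)\ge 0$ is the distance from $g$ to the line of $e$. The expression does not depend on $g$, since $\sum_e\ell_e n_e=0$. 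Throughout we may assume $|K(F)|>0$.

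\emph{Part (i).} Let $h_K$ be the support function of $K=K(F)$, and define the kernel-adapted anisotropic perimeter $\Per_K(F)=\sum_e \ell_e h_K(n_e)$. Since $\sum_e\ell_e n_e=0$, this is translation invariant. Because $K$ lies in every interior half-plane, $h_K(n_e)\le h_e$, so $\tfrac12\Per_K(F)\le|F|$. The vector $\ell_e n_e$ is the edge vector rotated by $-90^\circ$, and $h_K$ is sublinear. So replacing a boundary chain by the hull chord does not increase $\Per_K$, and therefore $\Per_K(C)\le\Per_K(F)$. For the convex polygon $C$, $\tfrac12\Per_K(C)$ is the mixed area $V(C,K)$. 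Minkowski's inequality $V(C,K)^2\ge|C|\,|K|$ then gives
\[
|C|\,|K|\le V(C,K)^2\le|F|^2,
\]
which is $G\le E$.

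\emph{Part (ii).} This is a finite computation. I would search among integer pentagons obtained from a convex quadrilateral by pushing one vertex slightly inward to form a shallow reflex vertex. The aim is that the kernel, cut off by the two half-planes through the reflex vertex, loses only a small sliver of area, while the perimeter excess of $F$ over $C$ is relatively larger; numerically, a kernel loss of $1/63$ against a perimeter loss of $4/189$. For the chosen vertices, I would compute $|F|$ by the shoelace formula and $K$ as the intersection of the five half-planes; its vertices are rational. Then $\Per(C)$ and $\Per(F)$ follow. The coordinates must be chosen so that the relevant side lengths are integers, coming from Pythagorean triples, so that $P$ is rational.

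\emph{Part (iii).} Integrate the identity over $g\in K$. This gives $|F|=\tfrac12\sum_e\ell_e\,d_e(c)$, where $c$ is the centroid of $K$. It then suffices to prove the lemma
\[
|K|\le d_e(c)\,\Per(K)\quad\text{for every line of a side of }F .
\]
Granting it, $\Per(K)\le\Per(C)$ by monotonicity of perimeter for nested convex sets. Hence $d_e(c)\ge|K|/\Per(C)$, so $|F|\ge\tfrac12\Per(F)\,|K|/\Per(C)$, i.e.\ $G\le 2P$. The lemma concerns a convex body $K$ and a line $L$ missing its interior; let $d$ be the distance from the centroid of $K$ to $L$. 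I would prove it by slicing $K$ parallel to $L$. Every slice has length at most $\Per(K)/2$. Since $K$ lies within distance $3d$ of $L$ by the Grünbaum--Minkowski centroid estimate, this already gives $|K|\le\tfrac32 d\,\Per(K)$, which is too weak by a factor $3/2$. The main obstacle is closing this gap. I would try the following. If $s(t)$ is the slice length at distance $t$ from $L$, then $s$ is concave, so $|K|=\int s$ and $d=\int t\,s/\int s$. Moreover $\Per(K)\ge 2\max s$ plus the lateral variation. I would then check the extremal case of a triangle with a side on $L$, where the inequality holds with room to spare, and reduce the general case to it by Steiner symmetrization about a normal of $L$. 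Symmetrization preserves $|K|$ and the slice profile $s$, hence $d$, and does not increase $\Per(K)$.
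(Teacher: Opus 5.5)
Part (i) is correct and follows the paper's argument: $\tfrac12\Per_K(F)\le|F|$, chord replacement via sublinearity of $h_K$, and $V(C,K)^2\ge|C|\,|K|$. Part (ii), however, is not proved. The statement asserts that a pentagon with the exact values $62/63$ and $185/189$ exists. You describe a search and restate the target, but you never write down a polygon, so the existence claim is left open. A concrete polygon with its kernel, areas and perimeters computed is required. The paper takes the trapezoid $\operatorname{poly}(v_1,v_2,v_3,v_5)$ with $v_1=(0,4620)$, $v_2=(0,-4620)$, $v_3=(23100,-385)$, $v_5=(23100,385)$, and cuts a notch into its short side at $v_4=(22176,0)$. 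Then $K(F)$ has vertices $(0,\pm4620)$, $(19800,\pm990)$ and $(22176,0)$. The shoelace formula, together with the triples $60^2+11^2=61^2$ and $12^2+5^2=13^2$, gives $|K|/|F|=62/63$ and $\Per(C)/\Per(F)=56980/58212=185/189$.

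In part (iii), your reduction to the lemma $|K|\le d_e(c)\Per(K)$ is correct. But the lemma is exactly the step you leave open, and the route you sketch for it would fail. Steiner symmetrization about a normal of $L$ produces a symmetric convex body, not a triangle. Moreover, the triangle with a side on $L$ is far from extremal. There $|K|/(d\Per(K))=3b/(2\Per(K))<3/4$, whereas the rectangle $[0,M]\times[0,w]$ with $L$ the $x$-axis gives $M/(M+w)\to1$. So checking the triangle ``with room to spare'' says nothing about the general case. The lemma is nevertheless true, and your own ingredients finish it without concavity or symmetrization. Since $t\ge0$ on $K$ and $0\le s\le M\le\Per(K)/2$, the bathtub principle (place the mass $|K|$ under the cap $s\le M$ as close to $L$ as possible) gives
\[
d\,|K|=\int t\,s(t)\,dt\;\ge\;\int_0^{|K|/M}Mt\,dt=\frac{|K|^2}{2M},
\]
hence $|K|\le 2Md\le d\Per(K)$. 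With this lemma, your centroid route is a valid alternative to the paper's. The paper instead uses the incenter $o$ of $K$. Since $d_e(o)\ge\rho$ for every side line, it gets $\rho\Per(F)\le2|F|$ (\cref{lem:fan}). It then bounds $|K|\le\rho\Per(K)$ by covering $K$ with inward rectangles of width $\rho$ on its sides (\cref{lem:inradius}).
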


The bounds in \Cref{thm:main}\textup{(i),(iii)} imply $G\not\gg E$ and $G\not\gg P$, respectively.  In other words, no sequence of simple polygons can have $G(F_n)\to1$ while either $E(F_n)\to0$ or $P(F_n)\to0$.

\section{An anisotropic perimeter for the exterior inequality}\label{sec:exterior}

The proof of the exterior inequality uses an anisotropic perimeter determined by the guard-point set.  Let
\[
R=\begin{pmatrix}0&1\\-1&0\end{pmatrix},\qquad R(x,y)=(y,-x),
\]
so that $R$ is clockwise rotation by $90^\circ$.  If a polygonal boundary is oriented counterclockwise and an edge has side vector $s_i=p_{i+1}-p_i$, then $n_i=Rs_i$ is the outward normal vector to that edge, with $\|n_i\|=\|s_i\|$.  For a compact convex set $A$, let
\[
h_A(u)=\max_{x\in A}x\cdot u
\]
be its support function.  If $Q$ is a polygonal region whose boundary is oriented counterclockwise, define the $A$-anisotropic perimeter of $Q$ by
\[
\Per_A(Q)=\sum_i h_A(n_i),
\]
where $n_i$ is the outward normal vector obtained by rotating the $i$-th side vector clockwise by $90^\circ$.  For our purposes this finite sum is the definition of $\Per_A(Q)$.  Equivalently, if $\nu_Q$ denotes the outward unit normal on each side, then each side contributes $h_A(\nu_Q)$ times its Euclidean length.  Thus the displayed sum is the polygonal version of the usual anisotropic perimeter; no measure-theoretic formulation is needed below.

In most applications below, the anisotropy body $A$ will be the guard-point set $K=K(F)$, and we then write $\Per_K(\cdot)$ for $\Per_{K(F)}(\cdot)$.  We will also apply the same framework to an inscribed disk in $K(F)$.  The functional $\Per_A(\cdot)$ is the anisotropic perimeter associated with the auxiliary body $A$; the ratio introduced below is a guard-point-adapted perimeter ratio rather than a new member of Sibley's original list.

Although the support function depends on the choice of origin, $\Per_A(Q)$ is unchanged when the anisotropy body $A$ is translated; the verification is given in \appref{app:anisotropic}.  Thus, when $|K(F)|>0$, we may normalize the support function by translating only its anisotropy body so that $0\in\operatorname{int}K$.  This leaves every quantity $\Per_K(Q)$ unchanged and ensures $h_K(u)>0$ for every nonzero $u$.

The same functional also has the affine covariance that is absent from ordinary Euclidean perimeter.  The proof is given in \appref{app:anisotropic}.

\begin{proposition}[Affine covariance]\label{prop:affine}
Let $\Phi$ be a nonsingular affine transformation with linear part $T\in\mathrm{GL}(2,\mathbb{R})$.  For every compact convex polygon $A$ and every polygonal region $Q$,
\[
\Per_{\Phi(A)}(\Phi(Q))=|\det T|\,\Per_A(Q).
\]
Moreover,
\[
K(\Phi(F))=\Phi(K(F)),\qquad
\conv(\Phi(F))=\Phi(\conv(F)).
\]
Consequently, $G$, $E$, and the kernel-adapted ratio $P_K$ defined below are invariant under nonsingular affine transformations.
\end{proposition}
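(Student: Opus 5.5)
We prove the three assertions in turn, beginning with the formula for $\Per_{\Phi(A)}(\Phi(Q))$, which rests on a single linear-algebra identity. Write $\Phi(x)=Tx+b$. By the translation invariance established in \appref{app:anisotropic}, translating the anisotropy body does not change $\Per$. Translating $Q$ also does not change it, since side vectors are translation invariant. So we may take $b=0$ and $\Phi=T$.

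There is one orientation subtlety. If $\det T<0$, then $T$ maps a counterclockwise boundary to a clockwise one. We must therefore reverse the vertex order of $T(Q)$, which replaces the image side vectors $Ts_i$ by $-Ts_i$ and reverses the order of summation. In either case the counterclockwise side vectors of $T(Q)$ are $\operatorname{sgn}(\det T)\,Ts_i$, and the corresponding outward normals are $\operatorname{sgn}(\det T)\,RTs_i$.

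The key identity is that for every $2\times2$ matrix $T$,
\[
RT=\det(T)\,T^{-\top}R.
\]
This is the adjugate relation: $R$ is rotation by $-90^\circ$, and $T^{\top}RT=\det(T)R$ holds for all $2\times 2$ matrices, as one checks by direct expansion. Hence the new outward normals are $|\det T|\,T^{-\top}n_i$.

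Next we use the support function identity $h_{T(A)}(u)=\max_{x\in A}Tx\cdot u=h_A(T^{\top}u)$, together with positive homogeneity of $h_A$. These give
\[
h_{T(A)}\bigl(|\det T|\,T^{-\top}n_i\bigr)=|\det T|\,h_A(T^{\top}T^{-\top}n_i)=|\det T|\,h_A(n_i).
\]
Summing over $i$ yields $\Per_{T(A)}(T(Q))=|\det T|\,\Per_A(Q)$. The hypothesis that $A$ is a convex polygon is used only to guarantee compactness and that $h_A$ is a maximum.

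The two set identities follow from the fact that a nonsingular affine map is a bijection of $\mathbb{R}^2$ preserving closed segments: $\Phi([g,p])=[\Phi(g),\Phi(p)]$. Hence $[g,p]\subset F$ if and only if $[\Phi(g),\Phi(p)]\subset\Phi(F)$. Quantifying over all $p\in F$, which is the same as quantifying over all $\Phi(p)\in\Phi(F)$, gives $K(\Phi(F))=\Phi(K(F))$. Similarly, $\Phi$ preserves convex combinations, so it maps convex sets to convex sets, and $\Phi^{-1}$ does likewise. Therefore $\Phi(\conv F)$ is the smallest convex set containing $\Phi(F)$. We also note that $\Phi(F)$ is again a simple polygon, so all measures are defined for it.

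For the invariance statements, areas scale by $|\det T|$, so $G$ and $E$ are ratios of areas with a common factor $|\det T|$, which cancels. The ratio $P_K$ will be a quotient of two quantities of the form $\Per_{K(F)}(\cdot)$, presumably $\Per_K(C)/\Per_K(F)$. By the first part applied with $A=K(F)$, together with $K(\Phi F)=\Phi K(F)$ and $\conv\Phi F=\Phi C$, both numerator and denominator pick up the same factor $|\det T|$, which cancels. The only steps needing care are the orientation sign when $\det T<0$ and the adjugate identity. The remaining steps are definition-chasing.
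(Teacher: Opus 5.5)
Your proof is correct and follows the paper's argument essentially step for step: you reduce to the linear part, combine the identity $T^{\mathsf T}RT=(\det T)R$ with $h_{TA}(u)=h_A(T^{\mathsf T}u)$ and positive homogeneity, use the segment-preservation argument for the kernel, and cancel $|\det T|$ in the ratios (your uniform factor $\operatorname{sgn}(\det T)$ simply packages the paper's separate treatment of $\det T<0$). One small correction: the paper defines $P_K(F)=\Per_K(K)/\Per_K(F)$, not $\Per_K(C)/\Per_K(F)$, but your final step applies verbatim because $K(\Phi(F))=\Phi(K(F))$ also carries the numerator's set along.
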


When $K=K(F)$ has positive area, define the guard-point-adapted perimeter ratio
\[
P_K(F)=\frac{\Per_K(K)}{\Per_K(F)}.
\]
Equivalently, this ratio could be written as $P_{K(F)}(F)$; after fixing $F$ we write $K=K(F)$ and use the shorter notation $P_K(F)$.  After the harmless normalization $0\in\operatorname{int}K$, the support function is positive on every nonzero edge-normal vector, and hence $\Per_K(F)>0$; thus the denominator in this ratio is positive.  The same framework that proves $G\le E$ also gives the following restored pointwise control for this anisotropic analogue.

\begin{proposition}\label{prop:PK}
Let $F$ be a simple polygon whose guard-point set $K=K(F)$ has positive area.  Then
\[
P_K(F)^2\le G(F)\le P_K(F).
\]
\end{proposition}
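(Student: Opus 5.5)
The plan is to prove the two inequalities separately, comparing both sides with quantities computed from the shoelace formula. Throughout, I translate coordinates so that some point $g\in\operatorname{int}K$ lies at the origin. By the translation invariance in \appref{app:anisotropic}, this changes neither $\Per_K(\cdot)$ nor $G$. Orient $\partial F$ counterclockwise, with vertices $p_i$, side vectors $s_i=p_{i+1}-p_i$ and outward normals $n_i=Rs_i$. Two elementary identities will be used.

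\emph{Shoelace identity.} For every choice of origin,
\[
2|F|=\sum_i p_i\cdot n_i .
\]
Indeed, $p_i\cdot n_i=p_i\cdot Rs_i$ is twice the signed area of the triangle $0,p_i,p_{i+1}$.

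\emph{Mixed-area identity.} For a convex polygon $Q$,
\[
\Per_K(Q)=2V(K,Q),
\]
where $V$ is the mixed area. This is the standard formula $V(K,Q)=\tfrac12\sum_i h_K(\nu_i)\ell_i$. In particular, $\Per_K(K)=2|K|$, and so
\[
P_K(F)=\frac{2|K|}{\Per_K(F)}.
\]

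\textbf{Upper bound $G\le P_K$.} Every $x\in K$ lies in the closed interior half-plane of each oriented side of $F$, that is, $(x-p_i)\cdot n_i\le 0$. Taking the maximum over $x\in K$ gives $h_K(n_i)\le p_i\cdot n_i$ for every $i$. Summing over $i$ and using the shoelace identity yields
\[
\Per_K(F)\le 2|F|.
\]
Therefore
\[
P_K(F)=\frac{2|K|}{\Per_K(F)}\ge\frac{|K|}{|F|}=G(F).
\]
This is the same mechanism I would expect to drive $G\le E$.

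\textbf{Lower bound $P_K^2\le G$.} Substituting $P_K=2|K|/\Per_K(F)$, the claim is equivalent to the anisotropic isoperimetric inequality
\[
\Per_K(F)^2\ge 4|K|\,|F|.
\]
I would prove this in two steps.

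\emph{Step 1: reduce to the convex hull.} Each side of $C=\conv(F)$ replaces a boundary chain of $F$ by a chord. Because the side vectors of the chain sum to the chord vector, the rotated normals also sum to the chord's normal. Sublinearity of $h_K$, namely $h_K(u+v)\le h_K(u)+h_K(v)$, then shows that the chord contributes at most what the chain did. Hence
\[
\Per_K(C)\le\Per_K(F).
\]
Combined with $|C|\ge|F|$, it suffices to prove $\Per_K(C)^2\ge 4|K|\,|C|$.

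\emph{Step 2: apply Minkowski's inequality.} By the mixed-area identity, $\Per_K(C)=2V(K,C)$. Minkowski's first (quadratic) mixed-area inequality $V(K,C)^2\ge|K|\,|C|$ for planar convex bodies then finishes the proof.

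The main obstacle is Step 2, specifically justifying Minkowski's inequality in a self-contained way. My first choice is to cite it as a classical consequence of Brunn--Minkowski. Brunn--Minkowski says $|C+tK|^{1/2}$ is concave in $t$, and $|C+tK|=|C|+2tV(K,C)+t^2|K|$. If a self-contained argument is preferred, concavity of the square root of this quadratic is exactly the discriminant condition $V(K,C)^2\ge|K|\,|C|$. A secondary point to check is that the chain-to-chord reduction in Step 1 is valid for a simple polygon whose hull vertices are polygon vertices; this holds because hull vertices of a polygon are always vertices of $F$.
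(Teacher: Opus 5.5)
Your proposal is correct and follows the paper's proof essentially step for step. The upper bound is exactly \cref{lem:perKK} together with the half-plane/shoelace bound of \cref{lem:perAF}. The lower bound is the chain-to-chord reduction of \cref{lem:convhull}, then the Wulff (Minkowski mixed-area) inequality for $C$, then $|C|\ge|F|$; the paper packages this last combination as $P_K(F)\le\sqrt{G(F)E(F)}\le\sqrt{G(F)}$.

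The one thin spot is in Step~1. Knowing that hull vertices are vertices of $F$ is not enough; you also need that $\partial F$ visits those vertices in the hull's cyclic order, so that the chains partition $\partial F$. The paper settles this with a short Jordan-curve argument in the proof of \cref{lem:convhull}.
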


The point of \Cref{prop:PK} is explanatory.  It isolates the directional information lost by ordinary Euclidean perimeter and shows that the expected pointwise control is restored for the guard-point-adapted anisotropic ratio.

We first record three geometric inputs.  Their proofs, including the sign convention connecting the clockwise rotation with the shoelace formula, are collected in \appref{app:anisotropic}.  The applications below only require $|K(F)|>0$; when $|K(F)|=0$, the inequalities with left-hand side $G(F)$ are immediate.

\begin{lemma}\label{lem:perKK}
If $K$ is a convex polygon with positive area, then
\[
\Per_K(K)=2|K|.
\]
\end{lemma}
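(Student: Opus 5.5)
We must prove $\Per_K(K)=2|K|$ for a convex polygon $K$ of positive area. The plan is to reduce the sum defining $\Per_K(K)$ to the shoelace formula by choosing a convenient origin. By the translation invariance recorded above (the anisotropy body may be translated without changing $\Per_K(Q)$), and since translating $Q=K$ itself leaves its side vectors and hence its normals $n_i$ unchanged, we may translate both copies of $K$ together. So we assume $0\in\operatorname{int}K$; any interior point would do, and in fact any point of $K$ suffices.

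Write $K$ with counterclockwise vertices $p_1,\dots,p_m$ (indices mod $m$), side vectors $s_i=p_{i+1}-p_i$, and outward normals $n_i=Rs_i$. The key geometric step is that for a convex polygon, the support function in the direction of an edge normal is attained along that whole edge. Because $K$ lies in the closed inner half-plane of its $i$-th side, every $x\in K$ satisfies $x\cdot n_i\le p_i\cdot n_i$, with equality at $p_i$. Hence
\[
h_K(n_i)=p_i\cdot n_i=p_i\cdot Rs_i .
\]

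Next we compute this dot product. With $p_i=(x_i,y_i)$ and $s_i=(a,b)$ we have $Rs_i=(b,-a)$, so $p_i\cdot Rs_i=x_ib-y_ia$. This is the cross product $p_i\times s_i=p_i\times p_{i+1}$, since $p_i\times p_i=0$. Summing over the edges gives
\[
\Per_K(K)=\sum_i p_i\times p_{i+1}=2|K|
\]
by the shoelace formula for a counterclockwise polygon.

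The only delicate point is sign bookkeeping: we must confirm that $R$ (clockwise rotation) indeed yields the outward normal for counterclockwise orientation, and that the shoelace sum is positive in that orientation. For the first, take the unit square with edge $s=(1,0)$ at the bottom; then $Rs=(0,-1)$ points outward, as required. Geometrically, each term $h_K(n_i)$ is twice the area of the triangle with apex $0$ over edge $i$, and these triangles tile $K$ when $0\in K$. This gives an independent check of the identity.
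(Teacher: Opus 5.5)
Your proof is correct and takes essentially the same route as the paper's. You identify $h_K(n_i)=p_i\cdot Rs_i=\det(p_i,p_{i+1})$ because each side of $K$ is a supporting side, and then sum using the shoelace formula. The preliminary translation to $0\in\operatorname{int}K$ is harmless but unnecessary, since the shoelace identity holds for any choice of origin; the paper omits that step.
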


\begin{lemma}\label{lem:perAF}
Let $F$ be a simple polygon, and let $A\subseteq K(F)$ be a nonempty compact convex set.  Then
\[
\Per_A(F)\le 2|F|.
\]
\end{lemma}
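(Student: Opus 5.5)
We prove \Cref{lem:perAF} one edge at a time, by writing each term $h_A(n_i)$ of $\Per_A(F)$ as twice the area of a triangle. Orient $\partial F$ counterclockwise, with vertices $p_1,\dots,p_m$ and side vectors $s_i=p_{i+1}-p_i$. Since $A$ is compact, the maximum defining $h_A(n_i)=\max_{a\in A}a\cdot n_i$ is attained at some $a_i\in A$. By the translation invariance recorded in \appref{app:anisotropic}, we may fix any origin. Here the origin is taken so that the shoelace formula reads $2|F|=\sum_i \det(p_i,p_{i+1})$.

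The first key step is an identity for $a\cdot n_i$, namely
\[
a\cdot n_i = a\cdot Rs_i = \det(p_{i+1}-p_i,\,a)\cdot(\pm1).
\]
We will fix the sign so that, for any point $a$,
\[
\det(p_i-a,\,p_{i+1}-a)=\det(p_i,p_{i+1})-a\cdot n_i .
\]
This is a direct expansion: $\det(p_i-a,p_{i+1}-a)=\det(p_i,p_{i+1})-\det(a,p_{i+1})-\det(p_i,a)=\det(p_i,p_{i+1})-\det(a,s_i)$. It then remains to check $\det(a,s_i)=a\cdot Rs_i$ with $R(x,y)=(y,-x)$. Indeed $a\cdot Rs_i=a_x s_y-a_y s_x=\det(a,s_i)$. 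Consequently
\[
h_A(n_i)=a_i\cdot n_i=\det(p_i,p_{i+1})-\det(p_i-a_i,p_{i+1}-a_i).
\]

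The second key step, and the main point, is the sign of the triangle term. Since $a_i\in A\subseteq K(F)$, the point $a_i$ lies in the closed interior half-plane of the oriented side $[p_i,p_{i+1}]$. This follows from the half-plane description of $K(F)$ for a counterclockwise polygon stated in the introduction. Hence $\det(p_i-a_i,p_{i+1}-a_i)\ge0$: the triangle $a_i p_i p_{i+1}$ is positively oriented or degenerate. Combined with the first step, this gives
\[
h_A(n_i)\le \det(p_i,p_{i+1})\quad\text{for each } i .
\]

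Summing over $i$ then yields $\Per_A(F)\le\sum_i\det(p_i,p_{i+1})=2|F|$. The estimate is sharp exactly when a single point works for every edge; this is consistent with \Cref{lem:perKK}, where equality holds. The only delicate matter is the sign bookkeeping between the clockwise rotation $R$ and the counterclockwise shoelace orientation. I would verify it on a unit square containing $a=0$: the left side $h(n_i)$ is nonnegative there, and the bound must point the right way.
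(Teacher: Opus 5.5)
Your argument is correct and is essentially the paper's proof. Both use that $A\subseteq K(F)$ lies in every closed inner half-plane $x\cdot n_i\le p_i\cdot n_i$ to get $h_A(n_i)\le p_i\cdot n_i=\det(p_i,p_{i+1})$ for each side, and then sum via the shoelace formula; your triangle identity is this same inequality written as $\det(p_i-a_i,p_{i+1}-a_i)\ge0$, and no choice of origin is needed since the shoelace formula is origin-independent.

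The only slip is in your closing aside on sharpness, which is not part of the proof. Equality holds exactly when $A$ touches the line of every side of $F$, possibly at different points for different sides, as in the convex case $A=K=F$ of \Cref{lem:perKK}. It does not require a single point that serves all edges.
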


\begin{lemma}\label{lem:convhull}
Let $F$ be a simple polygon with nonempty guard-point set $K=K(F)$, and let $C=\conv(F)$.  Then
\[
\Per_K(C)\le \Per_K(F).
\]
\end{lemma}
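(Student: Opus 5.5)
The statement to prove is \Cref{lem:convhull}: $\Per_K(C)\le\Per_K(F)$. The plan is to mimic the Euclidean argument from the introduction, now for the anisotropic functional. Each side of $C$ is either a side of $F$ or a chord replacing a boundary chain (a pocket) of $F$, so it suffices to compare one chord with its chain. The key tool is the subadditivity of the support function: $h_K(u+v)\le h_K(u)+h_K(v)$, since $h_K$ is sublinear, being a maximum of linear functionals.

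\textbf{Step 1: the hull structure.} Write $\partial F$ counterclockwise as $p_0,\dots,p_{m-1}$. The vertices of $C$ are vertices of $F$ and occur in the same cyclic order along $\partial F$. Hence $\partial C$ consists of consecutive hull vertices $q_j=p_{i_j}$ joined by chords $[q_j,q_{j+1}]$, and each chord replaces the chain $p_{i_j},p_{i_j+1},\dots,p_{i_{j+1}}$. I will justify this with the standard fact that, for a simple polygon, the boundary visits the hull vertices in convex-position order. The same orientation is inherited, so the counterclockwise orientation of $\partial C$ matches that of $\partial F$.

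\textbf{Step 2: chord versus chain.} For one pocket, the chord vector is the sum of the chain's side vectors:
\[
q_{j+1}-q_j=\sum_{i=i_j}^{i_{j+1}-1}s_i .
\]
Applying the linear map $R$ gives
\[
n^C_j=\sum_{i} n_i .
\]
Sublinearity then yields
\[
h_K(n^C_j)\le\sum_i h_K(n_i).
\]
Summing over the hull sides gives $\Per_K(C)\le\Per_K(F)$, because the chains partition the sides of $F$.

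\textbf{Expected obstacle.} Sublinearity needs no sign condition on $h_K$, so positivity is not required here, and the normalization $0\in\operatorname{int}K$ is not even needed for this step. The main obstacle is therefore the combinatorial Step 1. Specifically, I must show that the hull vertices are traversed by $\partial F$ in counterclockwise hull order, so that the chain sums telescope correctly and the orientations agree.

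For a star-shaped $F$ this is easier, and this is the case where the lemma is used ($K\ne\emptyset$). Pick $g\in K$. Then every ray from $g$ meets $\partial F$ in a single point, or in a segment in degenerate cases. Consequently the boundary is monotone in angle about $g$, and the hull vertices, which lie on $\partial F$, appear in angular order about $g$. Since $g$ lies in $C$, this angular order is the counterclockwise hull order. This handles Step 1 cleanly, and I would present it that way rather than invoking the general simple-polygon fact.
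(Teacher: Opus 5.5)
Your proof is correct, and its core is the paper's argument. Each hull side vector is the sum of the side vectors of the corresponding chain of $\partial F$; applying $R$ and the subadditivity of $h_K$ bounds that side's contribution by the chain's contribution; and the chains partition $\partial F$.

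The only real difference is how the cyclic-order fact of Step~1 is justified. The paper argues for an arbitrary simple polygon without using $K$. If $\partial F$ visited the vertices on $\partial C$ out of hull order, two of its arcs inside $C$ would have to cross. The paper also subdivides $\partial C$ at every vertex of $F$ on $\partial C$. That is unnecessary when, as you do, one uses chains between consecutive extreme points, since those chain vectors still sum to the chord.

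Your angular argument about a kernel point is more concrete and gives the counterclockwise orientation directly. However, you should take $g\in\operatorname{int}K$, or at least $g\in\operatorname{int}F$, rather than an arbitrary $g\in K$. For $g\in\partial F$, a ray from $g$ into $F$ meets $\partial F$ both at $g$ and at its exit point, so your ``single point or segment'' claim fails. Also, if $g$ is itself a hull vertex, it has no angle about itself.

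With that choice, Step~1 covers the case $|K|>0$, which is the only case in which the lemma is applied. A kernel with empty interior lying entirely in $\partial F$ would still need either an adapted angular argument or the general simple-polygon fact you mentioned as the alternative.
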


Geometrically, each hull side replaces a boundary chain of $F$, and the subadditivity of the support function makes this replacement non-increasing for the $K$-anisotropic perimeter.

We also use the following polygonal form of the planar Wulff inequality.  The short derivation from the planar Brunn--Minkowski theorem and the mixed-area formula is deferred to \appref{app:wulff}.

\medskip
\noindent\textbf{Planar Wulff inequality.}  For convex polygons $K,Q\subset\mathbb{R}^2$, with $K$ of positive area,
\[
\Per_K(Q)^2\ge 4|K|\,|Q|.
\]
Equality holds when $Q$ is homothetic to $K$.

\begin{proof}[Proof of $G(F)\le E(F)$]
If $|K|=0$, then $G(F)=0$.  Assume $|K|>0$ and use the translation normalization above, so the anisotropic perimeters below are nonnegative.  Applying the Wulff inequality to $Q=C=\conv(F)$ gives
\[
\Per_K(C)^2\ge 4|K|\,|C|.
\]
By \cref{lem:convhull,lem:perAF}, applied with $A=K$,
\[
\Per_K(C)\le \Per_K(F)\le 2|F|.
\]
All anisotropic perimeters in this chain are nonnegative, so we may combine the inequalities in squared form:
\[
4|K|\,|C|\le \Per_K(C)^2\le \Per_K(F)^2\le 4|F|^2.
\]
Dividing by $4|F|\,|C|$ yields
\[
\frac{|K|}{|F|}\le \frac{|F|}{|C|},
\]
which is exactly $G(F)\le E(F)$.
\end{proof}

Convex polygons satisfy equality in this inequality, since then $K=F=C$.  We do not attempt to characterize all equality cases.  Indeed, the equality conditions visible from this proof would have to hold simultaneously in the Wulff inequality, in the convex-hull replacement $\Per_K(C)\le \Per_K(F)$, and in the estimate $\Per_K(F)\le 2|F|$.  This makes the complete equality problem substantially more delicate than the pointwise inequality proved here.

\begin{proof}[Proof of \Cref{prop:PK}]
Use the same translation normalization, so that $\Per_K(F)>0$ while all anisotropic perimeters retain their original numerical values.

By \cref{lem:perKK,lem:perAF}, applied with $A=K$,
\[
P_K(F)=\frac{\Per_K(K)}{\Per_K(F)}=\frac{2|K|}{\Per_K(F)}\ge \frac{2|K|}{2|F|}=G(F).
\]
This proves $G(F)\le P_K(F)$.  For the other inequality, the Wulff inequality and \cref{lem:convhull} give
\[
\Per_K(F)\ge \Per_K(C)\ge 2\sqrt{|K|\,|C|}.
\]
Therefore
\[
P_K(F)=\frac{2|K|}{\Per_K(F)}\le \sqrt{\frac{|K|}{|C|}}=\sqrt{G(F)E(F)}.
\]
Since $E(F)\le 1$, this gives $P_K(F)\le \sqrt{G(F)}$.  Hence $P_K(F)^2\le G(F)$.
\end{proof}

\section{A pentagonal counterexample to \texorpdfstring{$G(F)\le P(F)$}{G(F) <= P(F)}}\label{sec:counterexample}
We now give the promised counterexample.  The coordinates are kept at an integer scale in order to make all relevant lengths integral; the relevant lengths come from two Pythagorean triples after their common factors are exposed.

Let $F$ be the simple polygon whose boundary vertices, in counterclockwise cyclic order, are
\[
v_1=(0,4620),\quad v_2=(0,-4620),\quad v_3=(23100,-385),\quad v_4=(22176,0),\quad v_5=(23100,385).
\]
A direct check shows that $v_4$ is the unique reflex vertex.  The convex hull is
\[
C=\conv(F)=\operatorname{poly}(v_1,v_2,v_3,v_5).
\]
Here $\operatorname{poly}(p_1,\dots,p_m)$ denotes the polygonal region with the listed vertices in cyclic order.  The kernel is
\[
\begin{aligned}
K(F)=\operatorname{poly}\bigl(& (0,-4620),(19800,-990),(22176,0),\\
                              & (19800,990),(0,4620)\bigr),
\end{aligned}
\]
and the exact values are summarized in \Cref{tab:counterexample-data}.  The half-plane calculation, shoelace computations, and integral side-length verification are collected in \appref{app:counterexample}.

\begin{table}[ht]
\centering
\caption{Exact data for the pentagonal counterexample.}
\label{tab:counterexample-data}
\begin{tabular}{@{}lrlr@{}}
\toprule
Quantity & Exact value & Quantity & Exact value\\
\midrule
$|F|$ & $115259760$ & $\Per(F)$ & $58212$\\
$|K(F)|$ & $113430240$ & $\Per(C)$ & $56980$\\
$|C|$ & $115615500$ & $G(F)$ & $62/63$\\
$P(F)$ & $185/189$ & $E(F)$ & $324/325$\\
\bottomrule
\end{tabular}
\end{table}

In particular,
\[
G(F)-P(F)=\frac1{189}>0.
\]
This proves that Sibley's pointwise perimeter inequality $G(F)\le P(F)$ is false.

\begin{figure}[ht]
\centering
\includegraphics[width=\linewidth,trim=0 8.5pt 0 6pt,clip]{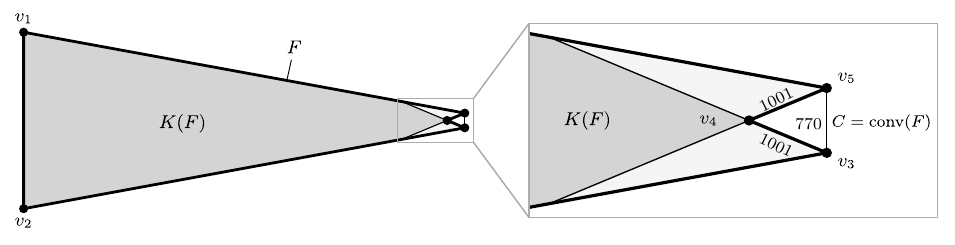}
\caption{The pentagonal counterexample.  Left: the full polygon $F$ and its kernel $K(F)$.  Right: an enlarged view of the boxed notch; $v_4$ is the unique reflex vertex, and the thin segment $v_3v_5$ is the hull edge.  The heavy curve is the polygonal boundary of $F$, and filling the notch gives $C=\conv(F)$.  The displayed lengths are in the original coordinate scale.}
\label{fig:counterexample}
\end{figure}

\section{The ordinary perimeter comparison}\label{sec:ordinary}
The counterexample shows that the pointwise inequality $G\le P$ is false.  It does not, however, imply that $G$ dominates $P$.  In fact the ordinary perimeter measure still controls $G$ up to the universal factor $2$.  We state the three elementary Euclidean estimates needed for the proof; their proofs are given in \appref{app:euclidean}.  The first estimate is a direct specialization of the anisotropic bound in \cref{lem:perAF} to a disk contained in the kernel.

\begin{lemma}\label{lem:fan}
Let $F$ be a simple polygon and suppose that $K=K(F)$ contains a disk $B(o,\rho)$ with $\rho>0$.  Then
\[
\Per(F)\le \frac{2|F|}{\rho}.
\]
\end{lemma}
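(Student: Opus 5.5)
We derive \cref{lem:fan} from \cref{lem:perAF} by taking the anisotropy body to be the disk itself, $A=B(o,\rho)$. The hypothesis $B(o,\rho)\subseteq K(F)$ and the compactness and convexity of the disk give
\[
\Per_{B(o,\rho)}(F)\le 2|F|.
\]
It then suffices to show that $\Per_{B(o,\rho)}(F)=\rho\,\Per(F)$, since dividing by $\rho>0$ yields the claim.

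To compute this anisotropic perimeter, we first use translation invariance of $\Per_A$, which is recorded before \Cref{prop:affine} and proved in \appref{app:anisotropic}. Translating the disk so that it is centered at the origin does not change $\Per_{B(o,\rho)}(F)$. For the centered disk, the support function is
\[
h_{B(0,\rho)}(u)=\max_{\|x\|\le\rho}x\cdot u=\rho\|u\|.
\]
Write $s_i$ for the side vectors of the counterclockwise boundary of $F$ and $n_i=Rs_i$ for the outward normals. Since $\|n_i\|=\|s_i\|$, the definition of the anisotropic perimeter gives
\[
\Per_{B(0,\rho)}(F)=\sum_i\rho\|n_i\|=\rho\sum_i\|s_i\|=\rho\,\Per(F).
\]
Substituting this into the bound from \cref{lem:perAF} proves the lemma.

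The argument has no real obstacle, but one point deserves a check. \Cref{lem:perAF} is stated for a general compact convex set $A$, not only for polygons, and the definition of $\Per_A$ through $h_A$ makes sense for any compact convex $A$. So the disk is an admissible choice, and the paper explicitly anticipates using it in \cref{sec:exterior}. If \cref{lem:perAF} were restricted to polygonal $A$, we would instead argue geometrically: for a star-shaped $F$ with $B(o,\rho)\subseteq K(F)$, the triangle with apex $o$ over each side has height at least $\rho$. The fan decomposition then gives $2|F|=\sum_i \ell_i d_i\ge\rho\,\Per(F)$, where $\ell_i$ is the length of the $i$-th side and $d_i$ is the signed distance from $o$ to its supporting line. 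Each $d_i\ge\rho$ because the disk lies in the closed interior half-plane of every side.
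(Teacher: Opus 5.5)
Your proposal is correct and matches the paper's proof. Both apply \cref{lem:perAF} with $A=B(o,\rho)$ and evaluate $\Per_{B(o,\rho)}(F)=\rho\Per(F)$; the paper writes $h_{B(o,\rho)}(n_i)=o\cdot n_i+\rho\|n_i\|$ and uses $\sum_i n_i=0$ directly, which is your translation-invariance step done inline, and your fallback argument (the shoelace formula centered at $o$) is also valid though not needed.
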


\begin{lemma}\label{lem:permonotone}
If $A\subset B$ are compact convex polygonal regions with nonempty interior, then
\[
\Per(A)\le \Per(B).
\]
\end{lemma}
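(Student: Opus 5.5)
The plan is to prove \cref{lem:permonotone} with the Cauchy (Crofton) projection formula for planar convex bodies, so that monotonicity of perimeter reduces to monotonicity of projections. For a compact convex set $Q$ with nonempty interior and a unit vector $u$, let $w_Q(u)=h_Q(u)+h_Q(-u)$ be the length of the orthogonal projection of $Q$ onto the line $\mathbb{R}u$. The polygonal form of Cauchy's formula is
\[
\Per(Q)=\frac12\int_0^{2\pi} h_Q(u_\theta)\,d\theta=\frac14\int_0^{2\pi} w_Q(u_\theta)\,d\theta,\qquad u_\theta=(\cos\theta,\sin\theta).
\]
Since $A\subset B$, we have $h_A\le h_B$ pointwise, directly from the definition of the support function as a maximum over the set. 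Integrating this inequality then gives $\Per(A)\le\Per(B)$.

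The one step that needs an argument is Cauchy's formula for a convex polygon $Q$. Orient $\partial Q$ counterclockwise, with sides $s_i$ of length $\ell_i$ and outward unit normals $\nu_i$. For fixed $u$, convexity means the projection of $\partial Q$ onto $\mathbb{R}u$ covers the projection segment exactly twice: once by the sides with $\nu_i\cdot u>0$ and once by those with $\nu_i\cdot u<0$. Hence
\[
w_Q(u)=\tfrac12\sum_i \ell_i\,|\nu_i^{\perp}\cdot u|,
\]
where $\nu_i^{\perp}$ is the unit direction of the $i$-th side. Using $\int_0^{2\pi}|e\cdot u_\theta|\,d\theta=4$ for every unit vector $e$, integration over $\theta$ yields $\int w_Q=2\sum_i\ell_i=2\Per(Q)$. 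This is the second form of the formula, and the first follows by the symmetry $\theta\mapsto\theta+\pi$. This double-covering count is the main point to state carefully, and it is the only place where convexity of $Q$ is used.

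If one prefers to avoid integral geometry, an alternative route stays closer to \cref{lem:convhull}. Each side line of $A$ is a supporting line of $A$, so we can cut $B$ successively by the closed half-planes determined by the sides of $A$. Each cut replaces a boundary chain of the current convex polygon by a chord, and a chord is no longer than the chain by the triangle inequality, so the perimeter does not increase. After finitely many cuts the intersection is exactly $A$, and therefore $\Per(A)\le\Per(B)$. Either route suffices; I would present the cutting argument as the main proof, since it is elementary and parallels the hull-replacement reasoning already used in the paper.
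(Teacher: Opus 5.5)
Your cutting argument is the paper's proof: the paper also cuts $B$ successively by the supporting half-planes of $A$ and applies the triangle inequality to each removed chain. It also records a detail you should add, namely that every intermediate intersection contains $A$, hence has nonempty interior, so each effective cut line crosses it in two distinct boundary points. Your Cauchy-formula alternative also proves monotonicity, but two details need fixing: the displayed constants are off by a factor of $2$ (your own count $\int_0^{2\pi}w_Q\,d\theta=2\Per(Q)$ gives $\Per(Q)=\int_0^{2\pi}h_Q(u_\theta)\,d\theta=\tfrac12\int_0^{2\pi}w_Q(u_\theta)\,d\theta$), and the double cover splits by the sign of $\nu_i^{\perp}\cdot u$, not $\nu_i\cdot u$.
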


\begin{lemma}\label{lem:inradius}
If $Q$ is a compact convex polygonal region with nonempty interior and inradius $\rho>0$, then
\[
|Q|\le \rho\Per(Q).
\]
\end{lemma}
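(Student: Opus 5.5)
The statement to prove is \cref{lem:inradius}: $|Q|\le\rho\Per(Q)$ for a compact convex polygonal region $Q$ with inradius $\rho$. The plan is to decompose $Q$ into triangles with a common apex at the center of a largest inscribed disk. Let $B(o,\rho)\subseteq Q$ be a disk of maximal radius; it exists by compactness. Since $B(o,\rho)\subseteq Q$, the point $o$ lies in $Q$. Write the sides of $Q$ as $e_1,\dots,e_m$ with lengths $\ell_i$, and let $d_i$ be the distance from $o$ to the supporting line $L_i$ of $e_i$.

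\emph{Step 1 (fan decomposition).} Because $Q$ is convex and $o\in Q$, the triangles $\conv(\{o\}\cup e_i)$ cover $Q$ and have pairwise disjoint interiors. If $o$ lies on the boundary, some of these triangles are degenerate, which is harmless. This gives
\[
|Q|=\sum_{i=1}^m \tfrac12\,\ell_i d_i .
\]
This is the same fan identity that underlies \cref{lem:perAF} and \cref{lem:fan}.

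\emph{Step 2 (bounding each $d_i$).} This is the step I expect to be the main obstacle. The bound $d_i\ge\rho$ holds for every $i$, but it points the wrong way: it gives $|Q|\ge\tfrac12\rho\Per(Q)$. So the upper bound must use maximality of $\rho$. For that, I will use the support-function form $Q=\bigcap_i\{x: x\cdot\nu_i\le h_Q(\nu_i)\}$, where the $\nu_i$ are the outward unit normals. Maximality of the inscribed disk means the linear program ``maximize $r$ subject to $c\cdot\nu_i+r\le h_Q(\nu_i)$ for all $i$'' has optimum $\rho$. By LP duality there are weights $\lambda_i\ge0$ with $\sum\lambda_i=1$ and $\sum\lambda_i\nu_i=0$, supported on the sides touched by the disk. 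Equivalently, $o$ lies in the convex hull of the tangency points. Dual feasibility, however, only controls a weighted average of the $d_i$ and does not bound the sum $\sum\ell_i d_i$.

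\emph{Step 3 (fallback via inner parallel bodies).} I therefore expect to replace Step 2 by a cleaner argument. For $0\le t\le\rho$, let $Q_{-t}=\{x\in Q:\operatorname{dist}(x,\partial Q)\ge t\}$ be the inner parallel body. Each $Q_{-t}$ is convex, and $Q_{-\rho}$ has empty interior because $\rho$ is the inradius. Hence
\[
|Q|=\int_0^\rho \Per(Q_{-t})\,dt ,
\]
by the coarea formula for the distance function, which has gradient of norm $1$ almost everywhere in $Q$. The level sets $\partial Q_{-t}$ are boundaries of convex sets contained in $Q$. By \cref{lem:permonotone}, $\Per(Q_{-t})\le\Per(Q)$ for every such $t$. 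Substituting into the integral yields $|Q|\le\rho\Per(Q)$.

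For a purely polygonal proof, the coarea step can be replaced by the elementary fact that $Q\setminus Q_{-t}$ is a union of trapezoids with height $t$ along each side. Then $|Q|-|Q_{-t}|\le t\Per(Q)$, and taking $t=\rho$ with $|Q_{-\rho}|=0$ gives the claim.
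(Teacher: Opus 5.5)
Your Step~3 is a correct proof, but it follows a different route from the paper. You slice $Q$ by the level sets of $d(x)=\operatorname{dist}(x,\partial Q)$. On $Q$ this function is $\min_j(b_j-x\cdot u_j)$, which is concave with $|\nabla d|=1$ almost everywhere, so the coarea formula gives $|Q|=\int_0^\rho\Per(Q_{-t})\,dt$. You then bound each slice by $\Per(Q)$ via \cref{lem:permonotone}.

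The paper never integrates. It writes $Q=\bigcap_j\{x: x\cdot u_j\le b_j\}$ with outward unit normals $u_j$, and notes that $Q_{-\rho}=\bigcap_j\{x: x\cdot u_j\le b_j-\rho\}$ is null. For $x\in Q\setminus Q_{-\rho}$, let $\delta=\min_j(b_j-x\cdot u_j)<\rho$, attained at index $j$. The point $p=x+\delta u_j$ satisfies $p\cdot u_i\le b_i$ for all $i$ and $p\cdot u_j=b_j$. Hence $x$ lies in the inward rectangle of width $\rho$ over the $j$-th side, and these rectangles have total area $\rho\Per(Q)$.

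This covering is exactly the rigorous form of your polygonal fallback. Your ``union of trapezoids'' description is only literally true until some edge of $Q_{-t}$ collapses. After that, the nearest-side cells are more general polygons, but each still lies inside the width-$t$ rectangle over its side, which is all the inequality needs.

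As for trade-offs: the paper's argument is elementary and purely polygonal, needing neither \cref{lem:permonotone} nor coarea or regularity facts about the distance function. Yours extends verbatim to arbitrary planar convex bodies and shows the slack exactly. Since $\Per(Q_{-t})<\Per(Q)$ for $t>0$, the inequality is in fact strict. Steps~1 and~2 play no role and can be dropped.
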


\begin{proof}[Proof of $G(F)\le 2P(F)$]
If $|K(F)|=0$, then $G(F)=0$.  Assume $|K|>0$.  Let $\rho$ be the inradius of $K$, and choose a maximal inscribed disk $B(o,\rho)\subset K$.  By \cref{lem:fan},
\[
\Per(F)\le \frac{2|F|}{\rho}.
\]
On the other hand, the intentionally nonsharp estimate \cref{lem:inradius}, applied to the convex polygon $K$, gives
\[
|K|\le \rho\Per(K).
\]
Since $K\subset C=\conv(F)$, \cref{lem:permonotone} gives
\[
\Per(K)\le \Per(C).
\]
Now compute
\[
\frac{G(F)}{P(F)}=\frac{|K|}{|F|}\cdot \frac{\Per(F)}{\Per(C)}.
\]
Using the upper bound for $\Per(F)$, we obtain
\[
\frac{G(F)}{P(F)}\le \frac{|K|}{|F|}\cdot \frac{2|F|}{\rho\Per(C)}=\frac{2|K|}{\rho\Per(C)}.
\]
Using $|K|\le \rho\Per(K)$ and $\Per(K)\le\Per(C)$, we conclude that
\[
\frac{G(F)}{P(F)}\le2,
\]
or equivalently $G(F)\le2P(F)$.
If $G\gg P$, there would be polygons $F_n$ with $G(F_n)\to 1$ and $P(F_n)\to 0$.  The displayed inequality would force $G(F_n)\le 2P(F_n)\to 0$, a contradiction.  Thus $G\not\gg P$.
\end{proof}

\section{Sibley's Conjecture 1 and the Beer index}\label{sec:beer}

Unlike the preceding results on Conjecture~2, the resolution of Conjecture~1 requires no new geometric estimate.  For $s\in F$, let
\[
V_F(s)=\{t\in F:[s,t]\subset F\}.
\]
Sibley's convexity coefficient is
\[
\chi(F)=\frac{1}{|F|^2}\int_F |V_F(s)|\,ds.
\]
Equivalently, it is the probability that two independent uniformly distributed points of $F$ see each other.  This is precisely the Beer index $b(F)$.  Sibley's interior measure is
\[
I(F)=\frac{1}{|F|}\max\{|Q|:Q\subset F\text{ is convex}\},
\]
which is the convexity ratio $c(F)$.  Thus, for simple polygonal regions,
\[
\chi(F)=b(F),\qquad I(F)=c(F).
\]

Corollary~1.4 of Balko, Jel\'inek, Valtr, and Walczak gives $b(F)\le180c(F)$ for every simple polygon $F$ \cite{BalkoEtAl2017}.  Consequently,
\[
\chi(F)\le180I(F).
\]
This proves Sibley's Conjecture~1 with $r=180$ and implies $\chi\not\gg I$.  The constant $180$ is inherited from the earlier Beer-index estimate; no claim of optimality is made here.

\section{Discussion}
The two pointwise assertions in Sibley's Conjecture~2 behave differently: the exterior inequality $G\le E$ is true, whereas the Euclidean perimeter inequality $G\le P$ is false.  The latter failure does not reverse the corresponding domination conclusion.  The uniform estimate $G(F)\le2P(F)$ still gives $G\not\gg P$.

The contrast between $P$ and $P_K$ is structural.  By \Cref{prop:affine}, $G$, $E$, and $P_K$ are invariant under nonsingular affine transformations, whereas the ordinary Euclidean perimeter ratio $P$ need not be; the family below gives a concrete illustration.  The pointwise comparison
\[
P_K(F)^2\le G(F)\le P_K(F)
\]
shows that the expected control is restored once the directional information encoded by the kernel is retained.

\begin{problem}[Optimal constant]
Determine
\[
\alpha^*=\sup_F\frac{G(F)}{P(F)},
\]
where the supremum is over all simple polygons.
\end{problem}

The upper bound in \Cref{thm:main}\textup{(iii)} gives $\alpha^*\le2$.  As a benchmark, apply the vertical compressions
\[
T_k(x,y)=(x,y/k),
\qquad F_k=T_k(F),
\]
to the pentagon $F$ from \cref{sec:counterexample}.  For every finite $k\ge1$, the map $T_k$ is nonsingular, so $F_k$ is again a simple polygon.  By \Cref{prop:affine},
\[
G(F_k)=G(F)=\frac{62}{63}.
\]
The side lengths give
\begin{align*}
\Per(F_k)
&=\frac{9240}{k}
  +2\sqrt{23100^2+\left(\frac{4235}{k}\right)^2}
  +2\sqrt{924^2+\left(\frac{385}{k}\right)^2}
  \longrightarrow 48048,\\
\Per(\conv(F_k))
&=\frac{10010}{k}
  +2\sqrt{23100^2+\left(\frac{4235}{k}\right)^2}
  \longrightarrow 46200.
\end{align*}
Hence $P(F_k)\to25/26$.  In particular, $P(F_1)=185/189$ while $P(F_k)\to25/26$, so $P$ is not invariant under nonsingular affine transformations.  Moreover,
\[
\frac{G(F_k)}{P(F_k)}
\longrightarrow
\frac{62}{63}\cdot\frac{26}{25}
=\frac{1612}{1575}
=1.023492063\ldots.
\]
Since $\alpha^*\ge G(F_k)/P(F_k)$ for every $k$, letting $k\to\infty$ gives $\alpha^*\ge1612/1575$.  Consequently,
\[
\frac{1612}{1575}\le\alpha^*\le2.
\]
This refinement is ancillary to the domination result.  The constant $2$ is used here to prove $G\not\gg P$ and is not claimed to be sharp; determining the exact value of $\alpha^*$ remains open.

\appendix
\section*{Appendices}
\section{Proofs for the anisotropic-perimeter estimates}\label{app:anisotropic}
We use the clockwise rotation
\[
R(x,y)=(y,-x).
\]
For a counterclockwise polygonal boundary, rotating an oriented side vector by $R$ gives the outward normal vector with length equal to that side.  The sign convention is compatible with the shoelace formula because
\[
a\cdot Rb=\det(a,b).
\]

The anisotropic perimeter used in the main text is translation invariant in its first argument.  If $A$ is compact and convex and $\tau\in\mathbb{R}^2$, then
\[
h_{A+\tau}(u)=h_A(u)+\tau\cdot u.
\]
For a closed polygonal boundary, the side vectors satisfy $\sum_i e_i=0$, and therefore the corresponding normal vectors satisfy $\sum_i Re_i=0$.  Hence
\[
\Per_{A+\tau}(Q)=\sum_i h_{A+\tau}(Re_i)=\sum_i h_A(Re_i)+\tau\cdot\sum_i Re_i=\Per_A(Q).
\]
This justifies translating the guard-point set when applying the Wulff inequality.

\begin{proof}[Proof of \Cref{prop:affine}]
Write $\Phi(x)=Tx+\tau$.  The translation vector $\tau$ plays no role in the covariance calculation: translating $Q$ does not change its side vectors, while $\Per_{A+\tau}(Q)=\Per_A(Q)$ by the translation invariance established above.
First suppose that $\det T>0$.  If $e_i$ are the counterclockwise oriented side vectors of $Q$, then $Te_i$ are the corresponding side vectors of $\Phi(Q)$.  For the clockwise quarter-turn $R$ used here,
\[
RT=(\det T)T^{-\mathsf T}R.
\]
Translations of the anisotropy body do not change anisotropic perimeter, and the support function satisfies $h_{TA}(u)=h_A(T^{\mathsf T}u)$.  Therefore
\begin{align*}
\Per_{\Phi(A)}(\Phi(Q))
&=\sum_i h_{TA}(RTe_i)\\
&=\sum_i h_A\bigl(T^{\mathsf T}RTe_i\bigr)\\
&=(\det T)\sum_i h_A(Re_i)
=(\det T)\Per_A(Q).
\end{align*}
If $\det T<0$, the transformed cyclic order is clockwise.  Reversing it replaces each transformed side vector by its negative, and the same computation gives the factor $-\det T=|\det T|$.  This proves the covariance formula.

Because an affine map carries every segment onto the segment joining the images of its endpoints,
\[
\Phi([g,p])=[\Phi(g),\Phi(p)].
\]
Since $\Phi$ is bijective, the defining visibility condition for the kernel is therefore preserved in both directions, which gives $K(\Phi(F))=\Phi(K(F))$.  Affine maps also commute with convex hulls, so $\conv(\Phi(F))=\Phi(\conv(F))$.  Moreover, every measurable set $S\subset\mathbb{R}^2$ satisfies
\[
|\Phi(S)|=|\det T|\,|S|.
\]
Thus the numerator and denominator of each of the area ratios $G$ and $E$ acquire the same factor $|\det T|$.  Together with the covariance formula for anisotropic perimeter, this proves that $G$, $E$, and $P_K$ are invariant under nonsingular affine transformations.
\end{proof}

\begin{proof}[Proof of \Cref{lem:perKK}]
Orient the boundary of $K$ counterclockwise, and write its vertices in cyclic order as $q_1,\dots,q_m$.  Put
\[
e_i=q_{i+1}-q_i,
\qquad n_i=Re_i,
\]
with indices taken modulo $m$.  Then $n_i$ is the outward normal vector to the side $[q_i,q_{i+1}]$.  Since this side is a supporting side of $K$,
\[
h_K(n_i)=q_i\cdot n_i.
\]
Moreover,
\[
q_i\cdot n_i=q_i\cdot R(q_{i+1}-q_i)=q_i\cdot Rq_{i+1}=\det(q_i,q_{i+1}),
\]
because $q_i\cdot Rq_i=0$.  Hence
\[
\Per_K(K)=\sum_i h_K(n_i)=\sum_i\det(q_i,q_{i+1})=2|K|.
\]
\end{proof}

\begin{proof}[Proof of \Cref{lem:perAF}]
Orient the boundary of $F$ counterclockwise, and let its vertices be $v_1,\dots,v_m$ in cyclic order.  Put
\[
e_i=v_{i+1}-v_i,
\qquad n_i=Re_i.
\]
The guard-point set is the intersection of the closed inner half-planes determined by the oriented sides of $F$.  With the clockwise-rotation convention used here, $n_i=Re_i$ is the outward normal, so the inner half-plane is exactly the set of points $x$ satisfying $x\cdot n_i\le v_i\cdot n_i$.  Since $A\subseteq K(F)$, every $x\in A$ satisfies
\[
x\cdot n_i\le v_i\cdot n_i.
\]
Taking the maximum over $x\in A$ gives
\[
h_A(n_i)\le v_i\cdot n_i.
\]
Summing over all sides and using the shoelace formula for the simple polygon $F$, we obtain
\[
\Per_A(F)=\sum_i h_A(n_i)\le \sum_i v_i\cdot n_i=\sum_i\det(v_i,v_{i+1})=2|F|.
\]
\end{proof}

\begin{proof}[Proof of \Cref{lem:convhull}]
Choose a cyclic subdivision of the convex-hull boundary whose vertices include all vertices of $F$ lying on $\partial C$.  This subdivision does not change $C$ or its anisotropic perimeter, because the support-function contribution is additive along consecutive collinear sides.  If several consecutive vertices of $F$ lie on the same side of $C$, we include all of them in this subdivision; zero-turn or collinear subdivisions do not affect the sum.  With this subdivision understood, each oriented side of $C$ has endpoints that are vertices of $F$.  Since $F\subset C$ and $\partial F$ is a Jordan curve, the visits of $\partial F$ to the vertices lying on $\partial C$ respect the cyclic order induced by $\partial C$; otherwise two connecting boundary arcs of $F$ inside the disk $C$ would have to cross, contradicting simplicity.

Let $ab$ be one such oriented side of the subdivided boundary of $C$, with side vector $e=b-a$.  We choose the boundary chain lying on the counterclockwise boundary of $F$ between the two consecutive hull vertices $a$ and $b$ in the induced cyclic order; by construction, this chain contains no other vertex of the subdivided hull boundary in its interior.  Replacing this chain by the segment $ab$ gives precisely the corresponding side of the convex hull subdivision.  Write the side vectors of this chain as $e_1,\dots,e_r$.  As $ab$ runs over the oriented sides of the subdivided boundary of $C$, these boundary chains are disjoint except for endpoints and together partition the boundary of $F$.  Hence
\[
e=e_1+\cdots+e_r.
\]
Since $R$ is linear,
\[
Re=Re_1+\cdots+Re_r.
\]
By subadditivity of the support function,
\[
h_K(Re)\le \sum_{j=1}^r h_K(Re_j).
\]
Thus the contribution of the hull side $ab$ to $\Per_K(C)$ is no larger than the sum of the contributions of the corresponding boundary chain of $F$.  Summing over all oriented sides of $C$ gives
\[
\Per_K(C)\le \Per_K(F).
\]
\end{proof}

\section{Exact verification of the counterexample}\label{app:counterexample}
This appendix records the exact arithmetic for the pentagonal counterexample in \cref{sec:counterexample}.

The vertices
\[
v_1=(0,4620),\quad v_2=(0,-4620),\quad v_3=(23100,-385),\quad v_4=(22176,0),\quad v_5=(23100,385)
\]
are in counterclockwise cyclic order.  Therefore the guard-point set is the intersection of the closed left half-planes determined by the oriented sides.  For reference, the following table labels the side constraints used in the calculation.  Except for the harmless simplification of $9240x\ge0$ to $x\ge0$ in $H_{12}$, the coefficients are left unreduced because they are obtained directly from the oriented edge vectors and the corresponding normals defining the left half-planes; reducing them would give the same half-planes but would obscure the integer scale at which the Pythagorean triples enter the construction.

\begin{center}
\begin{tabular}{@{}ll@{}}
\toprule
label and oriented side & closed left half-plane\\
\midrule
$H_{12}: v_1v_2$ & $x\ge 0$\\
$H_{23}: v_2v_3$ & $-4235x+23100y+106722000\ge 0$\\
$H_{34}: v_3v_4$ & $-385x-924y+8537760\ge 0$\\
$H_{45}: v_4v_5$ & $-385x+924y+8537760\ge 0$\\
$H_{51}: v_5v_1$ & $-4235x-23100y+106722000\ge 0$\\
\bottomrule
\end{tabular}
\end{center}

Intersecting the indicated boundary lines gives the following candidate vertices.  In the table, an active constraint is one attained with equality at the indicated vertex.  Direct substitution verifies that each candidate also satisfies all the remaining half-plane constraints, so these are precisely the vertices of the guard-point set, in cyclic order.
\begin{center}
\begin{tabular}{@{}ll@{}}
\toprule
vertex of $K(F)$ & active constraints\\
\midrule
$(0,-4620)$ & $H_{12},\ H_{23}$\\
$(19800,-990)$ & $H_{23},\ H_{45}$\\
$(22176,0)$ & $H_{34},\ H_{45}$\\
$(19800,990)$ & $H_{34},\ H_{51}$\\
$(0,4620)$ & $H_{51},\ H_{12}$\\
\bottomrule
\end{tabular}
\end{center}
Thus
\[
K(F)=\operatorname{poly}\bigl((0,-4620),(19800,-990),(22176,0),(19800,990),(0,4620)\bigr),
\]
where the displayed order is cyclic.  The shoelace formula, applied in this order, gives
\[
|K|=113430240.
\]
The same formula gives
\[
|F|=115259760,\qquad |C|=115615500.
\]
The side lengths are integral because the relevant difference vectors are scaled Pythagorean triples:
\begin{center}
\begin{tabular}{@{}lclc@{}}
\toprule
segment & component pattern & primitive relation & length\\
\midrule
$v_2v_3,\ v_5v_1$ & $385(\pm60,11)$ & $60^2+11^2=61^2$ & $23485$\\
$v_4v_3,\ v_4v_5$ & $77(12,\pm5)$ & $12^2+5^2=13^2$ & $1001$\\
$v_3v_5$ & $77(0,10)$ & vertical & $770$\\
\bottomrule
\end{tabular}
\end{center}
The perimeter values are
\[
\Per(F)=9240+23485+1001+1001+23485=58212,
\]
\[
\Per(C)=9240+23485+770+23485=56980.
\]
Therefore
\[
G(F)=\frac{113430240}{115259760}=\frac{62}{63},\qquad
E(F)=\frac{115259760}{115615500}=\frac{324}{325},
\]
\[
P(F)=\frac{56980}{58212}=\frac{185}{189}.
\]
Hence
\[
P(F)<G(F)<E(F).
\]

\section{Euclidean estimates for the ordinary perimeter comparison}\label{app:euclidean}
We spell out the compactness and disk-existence points used in \cref{sec:ordinary}.  In this paper a polygon is always the closed polygonal region bounded by a simple polygonal curve; in particular it is compact.  The guard-point set $K(F)$ is the intersection of finitely many closed inner half-planes and is contained in $F$, hence it is a compact convex polygon.  If $|K(F)|=0$, then $G(F)=0$ and the perimeter comparison is immediate.  Assume therefore that $|K(F)|>0$.  The function
\[
x\longmapsto \operatorname{dist}(x,\partial K(F))
\]
is continuous on the compact set $K(F)$, so it attains a maximum $\rho>0$ at some point $o\in K(F)$.  Thus $B(o,\rho)\subset K(F)$.

\par\medskip
\begin{proof}[Proof of \Cref{lem:fan}]
Apply \cref{lem:perAF} with $A=B(o,\rho)\subseteq K(F)$.  Since
\[
h_{B(o,\rho)}(n_i)=o\cdot n_i+\rho\|n_i\|
\qquad\text{and}\qquad
\sum_i n_i=0,
\]
we have
\[
\Per_{B(o,\rho)}(F)
=\sum_i h_{B(o,\rho)}(n_i)
=\rho\sum_i\|n_i\|
=\rho\Per(F).
\]
The generalized anisotropic estimate therefore gives $\rho\Per(F)\le2|F|$, which is the desired inequality.
\end{proof}

\begin{proof}[Proof of \Cref{lem:permonotone}]
First, cutting a compact convex polygonal region by a closed half-plane cannot increase its perimeter, as long as the resulting set has nonempty interior.  Let $Q$ be such a region and let $H$ be a closed half-plane.  If $Q\cap H=Q$, there is nothing to prove.  Otherwise, under the standing assumption that $Q\cap H$ has nonempty interior, the boundary line of $H$ must cross the interior of $Q$ and hence meets $\partial Q$ in two distinct points $X$ and $Y$.  The part of $\partial Q$ removed by the cut is a polygonal chain from $X$ to $Y$; let its side vectors be $w_1,\dots,w_s$.  Then
\[
w_1+\cdots+w_s=Y-X.
\]
By the triangle inequality,
\[
|Y-X|\le |w_1|+\cdots+|w_s|.
\]
The region $Q\cap H$ is obtained from $Q$ by replacing this boundary chain with the segment $XY$, so
\[
\Per(Q\cap H)\le \Per(Q).
\]
Now write $A$ as the intersection of its supporting half-planes,
\[
A=H_1\cap\cdots\cap H_r.
\]
Since $A\subset B$, we also have
\[
A=B\cap H_1\cap\cdots\cap H_r.
\]
Every intermediate intersection $B\cap H_1\cap\cdots\cap H_j$ contains $A$ and therefore has nonempty interior.  Thus the preceding half-plane-cut argument applies at every step.
Applying the preceding half-plane-cut argument successively gives
\[
\Per(A)\le \Per(B).
\]
\end{proof}

\begin{proof}[Proof of \Cref{lem:inradius}]
This estimate is intentionally one-sided in the less familiar direction.  Write the irredundant half-plane description of $Q$ as
\[
Q=\bigcap_j\{x:x\cdot u_j\le b_j\}
\]
with each $u_j$ an outward unit normal.  The set of centers of disks of radius $\rho$ contained in $Q$ is
\[
Q_\rho=\bigcap_j\{x:x\cdot u_j\le b_j-\rho\}.
\]
It is nonempty by the definition of the inradius and has empty interior by its maximality; hence $|Q_\rho|=0$.

Let $x\in Q\setminus Q_\rho$, set
\[
\delta=\min_j\bigl(b_j-x\cdot u_j\bigr)<\rho,
\]
and choose an index $j$ attaining the minimum.  Put $p=x+\delta u_j$.  For every $i$,
\[
p\cdot u_i
=x\cdot u_i+\delta\,u_j\cdot u_i
\le x\cdot u_i+\delta
\le b_i.
\]
Thus $p\in Q$ and $p\cdot u_j=b_j$, so $p$ lies on the side with outward normal $u_j$.  Consequently $x=p-\delta u_j$ belongs to the closed inward rectangle of width $\rho$ based on that side.  The set $Q\setminus Q_\rho$ is therefore covered by the inward rectangles based on all sides of $Q$.  Their total area is $\rho\Per(Q)$, and overlaps can only decrease the area of their union.  Since $|Q_\rho|=0$, we conclude that
\[
|Q|=|Q\setminus Q_\rho|\le \rho\Per(Q).
\]
\end{proof}

\section{A short derivation of the planar Wulff inequality used here}\label{app:wulff}
This appendix records the precise convex-geometric inequality used in \cref{sec:exterior}.  It is included to make the dependence explicit.  The argument uses the planar Brunn--Minkowski theorem and the mixed-area formula for convex polygons to derive exactly the polygonal Wulff inequality needed in the main text.

Let $K,Q\subset\mathbb{R}^2$ be convex polygons, and assume that $K$ has positive area.  For $t\ge 0$, consider the Minkowski sum
\[
Q+tK=\{q+tk:q\in Q,\ k\in K\}.
\]
By the planar mixed-area formula for convex polygons,
\begin{equation}\label{eq:mixed-area-expansion}
|Q+tK|=|Q|+2tV(Q,K)+t^2|K|.
\end{equation}
Here $V(Q,K)$ denotes the mixed area.  With the normalization used in this paper,
\[
2V(Q,K)=\Per_K(Q).
\]
Indeed, after refining the normal fans of $Q$ and $K$ if necessary, the first-order mixed-area term is obtained by moving each supporting side of $Q$ outward in its normal direction by the amount prescribed by the support function of $K$.  If $\ell_i$ and $u_i$ are respectively the side lengths and outward unit normals of $Q$, then
\[
2V(Q,K)=\sum_i \ell_i h_K(u_i)=\Per_K(Q).
\]
Thus \eqref{eq:mixed-area-expansion} becomes
\begin{equation}\label{eq:anisotropic-steiner}
|Q+tK|=|Q|+t\Per_K(Q)+t^2|K|.
\end{equation}
The coefficient $|K|$ in the quadratic term is the mixed-area coefficient of $K$ with itself, equivalently the area term obtained from the homothetic summand $tK$.

The planar Brunn--Minkowski theorem gives
\[
|Q+tK|^{1/2}\ge |Q|^{1/2}+t|K|^{1/2}\qquad(t\ge 0).
\]
Squaring and comparing with \eqref{eq:anisotropic-steiner}, we obtain
\[
|Q|+t\Per_K(Q)+t^2|K|\ge |Q|+2t\sqrt{|Q|\,|K|}+t^2|K|.
\]
For $t>0$, cancellation and division by $t$ yield
\[
\Per_K(Q)\ge 2\sqrt{|Q|\,|K|}.
\]
Therefore
\[
\Per_K(Q)^2\ge 4|K|\,|Q|,
\]
which is the planar Wulff inequality in the form used in the main text.  Equality occurs in the Brunn--Minkowski step when $Q$ and $K$ are homothetic, which gives the usual equality case.

\section*{Declaration of AI-assisted tools}
The author used AI-assisted tools for exploratory discussion, locating related terminology, algebraic checking, and editorial refinement.  The author independently verified the mathematical arguments and computations and takes full responsibility for the content.

\section*{Acknowledgements}
The author is deeply grateful to Thomas Q. Sibley for kindly providing a copy of his paper, and to Joseph O'Rourke for generous and encouraging correspondence.  Their willingness to respond to an independent inquiry was especially helpful in making this note possible.  The author also thanks the organizers of the CCCG 2025 open-problems session and the authors of the open-problems summary through which the question first came to his attention.

\end{document}